\documentclass{article}
\usepackage[utf8]{inputenc}

\usepackage{graphicx,graphics}

\usepackage{rotating}
\usepackage[twoside]{geometry}
\geometry{bindingoffset=10pt}
\usepackage{epsfig}
\usepackage{indentfirst}
\usepackage[usenames,dvipsnames,svgnames,table]{xcolor}
\usepackage{cmap}

\usepackage{graphicx,graphics}

\usepackage{hyperref}
\hypersetup{
pdftitle={Hydrogen atom},    
pdfauthor={AloCarvdeOliv},    
pdfnewwindow=true,      
colorlinks=true,      
linkcolor=black,         
citecolor=blue}

\usepackage{amssymb,amsfonts,amstext,amsthm}
\usepackage{mathrsfs}
\usepackage[intlimits]{amsmath}
\topmargin=-1.5cm \textheight=23 cm \textwidth=15 cm

\frenchspacing

\newtheorem{proposition}{\bf Proposition}[section]

\newtheorem{theorem}{\bf Theorem}[section]

\newtheorem{remark}{Remark}[section]

\newcommand{\LL}{{\mathrm{L}}}
\newcommand{\CC}{{\mathrm{C}}}
\newcommand{\rd}{{\mathrm{d}}}

\newcommand{\eps}{\epsilon}

\title{Hydrogen atom bound states whose spectral measures have positive upper fractal dimensions}
\author{M. Aloisio \\ {\small Departamento de Matem\'atica, UFAM, Manaus, AM, 369067-005 Brazil}\\ S. L. Carvalho \\  {\small Departamento de Matem\'atica, UFMG, Belo Horizonte, MG, 30161-970 Brazil} \\ C. R. de Oliveira\thanks{Corresponding author. Email: oliveira@dm.ufscar.br} \\ {\small Departamento de Matem\'atica, UFSCar, S\~ao Carlos, SP, 13560-970 Brazil}}
\date{March 2020}

\begin{document}

\maketitle

\begin{abstract} It is shown that, Baire generically, the bound states of the Hamiltonian of the Hydrogen atom have spectral measures with exact $0$-lower and $1/3$-upper generalized fractal dimensions; the relation to (a weak form of) dynamical delocalization along orthonormal bases is also discussed.
Such result is a consequence of the distribution of the Hamiltonian eigenvalues.
\end{abstract} 

\

\noindent Keywords: Hydrogen atom; bound states; generalized fractal dimensions; spectral measures. 

\



\section{Introduction}

\subsection{Contextualization}

\noindent The Hydrogen atom (H-atom) Hamiltonian, which is given, in suitable units, by 
\[H = -\Delta - \frac{\kappa}{r},\qquad r=|x|, \;\kappa>0,\] 
and acts in the Hilbert space $\LL^2(\mathbb R^3)$, is an outstanding quantum model that combines simplicity and physical importance. Surely, $H$ and the harmonic oscillator Schr\"odinger operator are the most studied and discussed quantum models in the literature, and both models can be exactly solved. However, the H-atom Hamiltonian describes a more realistic case, and some of its (quantum) energy transitions were experimentally obtained before the birth of quantum theory and have constituted a strong support to the eigenvalue theory developed by Schr\"odinger in the 1920s. Spin is not considered here.

Notably, from the mathematical viewpoint, the general quantum framework proposed by von Neumann~\cite{vNeumann}, with the nonobvious concept of unbounded self-adjoint operators, was  supported by Kato's proof~\cite{katoHH} that~$H$, with domain equal to the space of smooth functions with compact support, has exactly one self-adjoint extension. Kato has actually also considered the Hamiltonians for other atoms and molecules.

It is well known that $H$ has a point spectrum component, with eigenvalues  of the form~\cite{williams}
\[\lambda_n =- \frac{\Lambda}{n^2},\qquad n=1,2,3,\cdots,\]
with $\Lambda=\kappa^2/4$ in the considered units; each $\lambda_n$ has multiplicity~$n^2$. Let us denote by $|n,l,m\rangle$ the corresponding orthonormalized eigenfunctions, so that, for each~$n$,
\[H|n,l,m\rangle  = \lambda_n |n,l,m\rangle ,\quad l=0,1,\cdots, n-1,\quad m=-l,\cdots,l.\] In the position representation, such eigenfunctions correspond to the famous orbitals, and any element in the closed subspace generated by them (we denote such subspace by $\dot {\mathcal H}$; see ahead) is interpreted as a {\em bound state}.

There is also a continuous spectrum filling the interval $[0,\infty)$; vectors in the continuous subspace are called {\em unbound states}, and they describe ionized H-atoms. Thus, the dynamics of vectors in such spaces are expected to be quite different. The main interest here is on the point subspace, $\dot{\mathcal H}$, generated by the eigenvectors $\{|n,l,m\rangle \}_{n,l,m}$ of~$H$, and only vectors in this (closed) subspace will be considered in what follows; the restriction of~$H$ to this subspace generated by orbitals will be denoted by~$\dot H$ (recall that $\dot{\mathcal H}$ reduces~$\dot H$; see Section~9.8 and Theorem~12.1.2 in~\cite{ISTQD}). 

The restricted (self-adjoint) operator~$\dot H$ is bounded, its spectrum is $\{\lambda_n\}_{n=1}^\infty\cup\{0\}$, and given a general initial bound state condition $|\psi\rangle=\sum_{n,l,m} a_{n,l,m}\,|n,l,m\rangle $ in~$\dot{\mathcal H}$, its action over $|\psi\rangle$ is simply
\[\dot H |\psi\rangle =- \sum_{n,l,m} \frac{\Lambda}{n^2}\,a_{n,l,m}\,|n,l,m\rangle, \]
whereas the dynamics of $|\psi\rangle$ generated by $\dot H$ is given by the solution
\[|\psi(t)\rangle=e^{-it{\dot H}}|\psi\rangle  = \sum_{n,l,m} e^{it\Lambda/n^2}\,a_{n,l,m}\,|n,l,m\rangle \]to the Schr\"odinger equation
\[i\frac{\partial v}{\partial t} = \dot Hv,\qquad v(0)=|\psi\rangle.\]
The unitarity of the time evolution implies that 
\[
\||\psi\rangle\|^2=\||\psi(t)\rangle\|^2=\langle \psi(t)|\psi(t)\rangle=\displaystyle\sum_{n,l,m}|a_{n,l,m}|^2,
\] for all~$t\in\mathbb R$.

To each state $|\phi\rangle$, one associates a (unique) positive and finite measure $\mu_{|\phi\rangle}$, the so-called {\em spectral measure of~$|\phi\rangle$} in the mathematics literature and {\em density of states} in the physics literature, characterized by
\[\langle\phi|\phi(t)\rangle = \int_{\mathbb R} e^{-itx}\,\rd \mu_{|\phi\rangle}(x),\quad \forall t\in\mathbb R.\]
Explicitly, the spectral measure of a bound state~$|\psi\rangle$ is given by a sum of Dirac deltas (which are particular cases of the so-called {\em pure point measures}), i.e.,
\[ \mu_{|\psi\rangle}(x) = \sum_{n,l,m} |a_{n,l,m}|^2 \,\delta_{\lambda_n}(x)\,,\] 
from which one may clearly see its interpretation: for a normalized $|\psi\rangle$, $\mu_{|\psi\rangle}$ is concentrated on the eigenvalues~$\lambda_n$ of~$\dot H$ whose weights are exactly the probabilities~$|a_{n,l,m}|^2$ of the state to be found in the corresponding eigenfunctions~$|n,l,m\rangle$ (thus, $\mu_{|\psi\rangle}(\{\lambda_n\})=\sum_{l,m}|a_{n,l,m}|^2$ is the probability of the state to be found in an eigenstate corresponding to the energy~$\lambda_n$).

Besides the possibility of being a combination of Dirac deltas, for general self-adjoint operators the spectral measures may be directly related to the Lebesgue measure (a particular case of the absolutely continuous spectrum, the kind of spectrum for vectors in the unbound states of the H-atom), or even to singular continuous measures (that do not occur for the H-atom), which are singular with respect to the Lebesgue measure and do not have atoms, usually related to fractal spectrum and characterized by nontrivial values of the {\em upper} and {\em lower generalized fractal dimensions} $D^\pm_\mu(q)$, $0<q\ne1$ (see ahead). Such notions of measure dimensions were introduced in the physics literature by Grassberger and Procaccia~\cite{GPcorrDim,hentschelProcaccia} to study invariant measures of strange attractors in dynamical systems; later on, they were  also applied to the study of spectral measures (see, for instance, \cite{BGT1997,BGT2001,GKT}).  

The dimensions $D_\mu^\pm(q)$ have other names, such as multifractal dimensions and Hentschel-Procaccia dimensions. A mathematical discussion of some fractal dimensions of measures may be found in Cutler's work~\cite{cutler1991} (see also the discussion in~\cite{MMR}).

The general belief is that a measure with positive dimension is concentrated (supported) on a ``fractal set''; since the spectral measures of bound states are instances of pure point measures, concentrated on the eigenvalues of~$\dot H$, and since such measures have no apparent fractal characteristics (for instance, their Hausdorff and packing dimensions are equal to zero; see~\cite{BGT2001}), it is usually expected that their dimensions vanish for all values of the parameter~$q$. However, there are exceptions for some point measures and our main goal in Section~\ref{sectDimensions}  is to point out that there is a ``big set'' of bound states of the H-atom that fits in the exceptions! Namely, in this work, we use the explicit form of the eigenvalues of the H-atom operator to show that, Baire generally, bound states have spectral measures with exact $0$-lower and $1/3$-upper generalized fractal dimension for all $0<q<1$  (see Theorem~\ref{mainTheor} for details and precise statements). Note that it is known that for pure point measures, $D_\mu^{\pm}(q) =0$ for all~$q>1$, so values in the range $0<q<1$ should be the ones considered in Section~\ref{sectDimensions}. 

\subsection{Fractal dimensions and main result}\label{sectDimensions}

\noindent To recall the concept of generalized fractal dimensions of  a Borel finite positive measure~$\mu$ on~$\mathbb R$, set $B(x;\epsilon)=(x-\eps,x+\eps)$ and consider (the integration is restricted to the sets so that $ \mu ( B(x;\epsilon))>0$)
\[I_\mu(q,\epsilon):=  \int \mu ( B(x;\epsilon))^{q-1} {\mathrm d}\mu(x),\]  
which takes into account a statistical distribution of~$\mu$; for values of~$q>1$, the main contribution to $I_\mu(q,\epsilon)$ comes from sets for which~$\mu$ charges most, whereas for~$q<1$, the main contribution is from the most rarefied sets charged by~$\mu$. 

Roughly, the generalized dimension~$D_\mu(q)$ of the measure~$\mu$ indicates how $I_\mu(q,\epsilon)$ polynomial scales for small~$\epsilon$, that is,
\[I_\mu(q,\epsilon)^{\frac{1}{q-1}}\;\sim\; \epsilon^{D_\mu(q)}.\]
The  precise definitions of the {\em lower} and {\em upper $q$-generalized fractal dimensions of~$\mu$} are, respectively,
\[D_\mu^-(q) := \liminf_{\epsilon \downarrow 0} \frac{\ln \,I_\mu(q,\epsilon)}{(q-1) \ln \epsilon}\,,\qquad D_\mu^+(q) := \limsup_{\epsilon \downarrow 0} \frac{\ln \,I_\mu(q,\epsilon)}{(q-1) \ln \epsilon}\,\]
 (note that the limit $\lim_{\epsilon \downarrow 0} \dfrac{\ln \,I_\mu(q,\epsilon)}{(q-1) \ln \epsilon}$ may not exist).

Suppose that~$\mu$ has  bounded support (that is, there exist $-\infty<a\le b<+\infty$ such that if $A$ is a Borel subset of $\mathbb{R}\setminus[a,b]$, then $\mu(A)=0$); some basic properties of such dimensions are~\cite{BGT1997}:
\begin{enumerate} 
\item The functions $ q \mapsto D_\mu^\pm(q)$ are nonincreasing and continuous on $(0,1)\cup (1,\infty)$.
\item  For all $q>0$, $q\ne1$,  one has $0 \leq D_\mu^-(q) \leq D_\mu^+(q) \leq 1$. 
\item For all $q>0$, $q\ne1$,  one has alternative limits for the dimensions:
\[D_\mu^-(q) = \liminf_{\epsilon \downarrow 0} \frac{\ln [\epsilon^{-1} \int_{\mathbb{R}} \mu (B(x,\epsilon))^q \, {\mathrm d}x]}{(q-1) \ln \epsilon} \quad {\rm  and } \quad D_\mu^+(q) = \limsup_{\epsilon \downarrow 0} \frac{\ln [\epsilon^{-1} \int_{\mathbb{R}} \mu (B(x,\epsilon))^q \, {\mathrm d}x]}{(q-1)\ln \epsilon}.\]
\end{enumerate}

Note that the spectral measure of any bound state vector $|\psi\rangle$ of the H-atom, 
\[\mu_{|\psi\rangle}(\cdot) = \sum_{n,l,m} |a_{n,l,m}|^2 \,\delta_{\lambda_n}(\cdot),\] 
has bounded support  (since it vanishes outside $[-\Lambda,0]$), so $D^\pm_{\displaystyle\mu_{|\psi\rangle}}(q)$ satisfy the properties listed above. As mentioned in the Introduction, they are examples of pure point measures, and  it is usually expected that their dimensions $ D_{\displaystyle\mu_{ |\psi\rangle}}^\pm(q)$ vanish for all values of the parameter~$q$.

We finally present the main result of this work. Recall that a {\em generic set} in a complete metric space is a countable intersection of open and dense sets; this concept constitutes the usual notion of ``typical set'' from the topological point of view and it
is intimately connected to Baire's Category Theorem.

\begin{theorem}\label{mainTheor} There exists a generic set~$\mathcal R$  of vectors  in~$\dot{\mathcal H}$ so that, for each~$|\psi\rangle\in\mathcal R$ and each $0<q<1$,
\[D_{\displaystyle\mu_{|\psi\rangle}}^-(q)=0 \qquad  and \qquad D_{\displaystyle\mu_{|\psi\rangle}}^+(q) =  \frac{1}{3}.\] 

\end{theorem}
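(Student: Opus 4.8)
The plan is to reduce everything to the total weights $b_n:=\mu_{|\psi\rangle}(\{\lambda_n\})=\sum_{l,m}|a_{n,l,m}|^2$, since the spectral measure $\mu_{|\psi\rangle}=\sum_n b_n\,\delta_{\lambda_n}$ depends only on these (and any nonnegative $b_n$ is realizable, e.g.\ by charging $|n,0,0\rangle$). Its support lies in $S:=\{-\Lambda/n^2:n\ge1\}\cup\{0\}$, and a standard computation (since $\lambda_{n+1}-\lambda_n\sim 2\Lambda n^{-3}$, at energy scale $u$ the spacing is $\sim u^{3/2}$ and a dyadic shell $|x|\sim 2^{-k}$ carries $\sim 2^{k/2}$ eigenvalues) gives $\overline{\dim}_{\mathrm B}S=\underline{\dim}_{\mathrm B}S=\tfrac13$. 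I would exhibit $\mathcal R$ as $\mathcal R^+\cap\mathcal R^-$, where $\mathcal R^+=\{\,|\psi\rangle: D^+_{\mu_{|\psi\rangle}}(q)=1/3 \ \forall\,q\in(0,1)\}$ and $\mathcal R^-=\{\,|\psi\rangle: D^-_{\mu_{|\psi\rangle}}(q)=0 \ \forall\,q\in(0,1)\}$, and prove each is a dense $G_\delta$; by Baire's theorem their intersection is then a dense $G_\delta$, hence generic, and every element has both prescribed dimensions. Using the continuity of $q\mapsto D^\pm_\mu(q)$ on $(0,1)$ (Property~1), it suffices to control a fixed rational $q$ and intersect over $q\in\mathbb Q\cap(0,1)$.

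The first ingredient is a \emph{universal} upper bound $D^+_{\mu_{|\psi\rangle}}(q)\le 1/3$, valid for \emph{every} bound state. Using the alternative formula of Property~3 and writing $M=\|\psi\|^2$, $V_\epsilon=|\{x:\operatorname{dist}(x,S)<\epsilon\}|$, concavity of $t\mapsto t^{q}$ (via Jensen's inequality on $\{x:\mu_{|\psi\rangle}(B(x,\epsilon))>0\}$) together with $\int \mu_{|\psi\rangle}(B(x,\epsilon))\,\mathrm dx=2\epsilon M$ gives $\int \mu_{|\psi\rangle}(B(x,\epsilon))^q\,\mathrm dx\le V_\epsilon^{\,1-q}(2\epsilon M)^q$, whence $\epsilon^{-1}\!\int\mu_{|\psi\rangle}(B(x,\epsilon))^q\,\mathrm dx\lesssim N(\epsilon)^{1-q}$ with $N(\epsilon)\asymp V_\epsilon/\epsilon$ the box-counting function of $S$. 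Dividing logarithms by $(q-1)\ln\epsilon=(1-q)\ln(1/\epsilon)>0$ yields $D^+_{\mu_{|\psi\rangle}}(q)\le\overline{\dim}_{\mathrm B}S=1/3$. In particular $\{D^+(q)=1/3\}=\{D^+(q)\ge1/3\}$, so for $\mathcal R^+$ I only ever need a \emph{lower} bound on $I_{\mu}(q,\epsilon)$.

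For $\mathcal R^-$ the density is immediate: if $|\psi\rangle$ has finite support then, for $\epsilon$ below the minimal gap between its finitely many eigenvalues, $I_{\mu}(q,\epsilon)=\sum_m b_m^q\in(0,\infty)$ is constant, so $D^\pm(q)\equiv 0$; and finite-support vectors are dense in $\dot{\mathcal H}$. The $G_\delta$ structure of both $\mathcal R^\pm$ follows by writing, e.g., $\{D^-(q)\le 0\}=\bigcap_{k,m}\bigcup_{\epsilon}\{\,|\psi\rangle: I_\mu(q,\epsilon)<\epsilon^{(q-1)/k}\}$, the union being over a countable dense set of $\epsilon\in(0,1/m)$ avoiding the inter-eigenvalue distances, and using that for such fixed $q,\epsilon$ the map $|\psi\rangle\mapsto I_{\mu_{|\psi\rangle}}(q,\epsilon)$ is continuous (the weights $b_n$ depend continuously on $|\psi\rangle$, and $\mu(B(\lambda_m,\epsilon))\ge b_m$ controls each term); the analogous description of $\{D^+(q)\ge1/3\}$ with strict inequalities $I_\mu>\epsilon^{(q-1)(1/3-1/k)}$ gives a $G_\delta$ as well.

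The crux is the density of $\mathcal R^+$. Given a finite-support $|\psi_0\rangle$ and $\eta>0$, I would add small masses on dyadic shells placed at a very sparse sequence of energy scales $u_j\downarrow 0$ (spacings $s_j\sim u_j^{3/2}$, with $\sim u_j^{-1/2}=:P_j$ eigenvalues per shell), distributing the total shell mass $W_j$ \emph{uniformly} over the $P_j$ eigenvalues of shell $j$. Probing at $\epsilon=s_j$, the ball around each eigenvalue of shell $j$ meets only $O(1)$ eigenvalues (the shells being taken far apart), so $\mu(B)\asymp W_j/P_j$ there and shell $j$ \emph{alone} gives $I_\mu(q,s_j)\ge c\,P_j\,(W_j/P_j)^q=c\,s_j^{(q-1)/3}W_j^{\,q}$; hence the ratio is $\ge \tfrac13-\frac{q\,|\ln W_j|}{(1-q)\ln(1/s_j)}\to \tfrac13$ as soon as $|\ln W_j|/\ln(1/s_j)\to 0$. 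Choosing, say, $W_j=\alpha j^{-2}$ and $s_j=e^{-j^3}$ makes this hold for \emph{every} $q\in(0,1)$ simultaneously while $\sum_j W_j=\alpha\pi^2/6<\eta^2$, so the perturbation has norm $<\eta$; combined with the universal bound this forces $D^+_{\mu_{|\psi\rangle}}(q)=1/3$ for all $q$. The main obstacle is precisely this step: a priori one fears that the contributions of the other shells and of $|\psi_0\rangle$, or the unboundedness of $\sum_m b_m^q$ as $q\downarrow0$, could spoil the scaling uniformly in $q$. The resolution is that, because the box-dimension estimate already caps $D^+$ at $1/3$, I never need the shell-$j$ term to \emph{dominate}: additional mass only increases $I_\mu$ and can thus only help the lower bound, so the single-shell estimate above suffices for all $q$ at once.
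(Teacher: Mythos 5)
Your proposal is correct, and although it shares the paper's overall skeleton (a universal bound $D^+_{\mu}(q)\le 1/3$ for every nonzero bound state, plus dense-$G_\delta$ arguments for $\{D^-\le 0\}$ and $\{D^+\ge 1/3\}$, combined via Baire), each of the three key ingredients is obtained by genuinely different means. (i) For the universal upper bound you apply H\"older against Lebesgue measure, $\int\mu(B(x,\eps))^q\,\rd x\le\bigl(2\eps\|\psi\|^2\bigr)^q V_\eps^{1-q}$, and invoke the box-counting dimension $\overline{\dim}_{\mathrm B}S=1/3$ of the eigenvalue set; this gives the bound $1/3$ for each fixed $q$ in one stroke. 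The paper's Proposition~\ref{mainlemma} instead partitions $[-\Lambda,0)$ at $\lambda_{N_\eps}$, bounds the isolated part by $\sum_{n\le N_\eps}n^{-r}$ via Jensen against the atoms, obtains only $D^+(r)\le\frac{1}{3(1-r)}$, and must then send $r\to 0$ using monotonicity in the parameter; your route is cleaner, makes transparent that the mechanism is purely the geometry of $\{\lambda_n\}\cup\{0\}$, and yields the generalization of Remark~\ref{remark1} for free. (ii) For density of $\{D^-(q)=0\}$ you use finitely supported vectors, for which $I_\mu(q,\eps)$ is eventually constant in $\eps$; the paper uses truncation-plus-infinite-tail vectors $|\psi_k\rangle$, which is more than is needed for this half. (iii) For density of $\{D^+(q)\ge 1/3\}$ the paper perturbs by the power-law vectors $|\psi_j\rangle=\sum_n n^{-(1+1/j)/2}|n,0,0\rangle$, gets the $q$-dependent bound $\frac{1-(1+1/j)q}{3(1-q)}$, and needs a double countable intersection (over $j$, then over $q$); your sparse-shell perturbation (uniform mass $W_j/P_j$ on the $P_j\sim s_j^{-1/3}$ eigenvalues at spacing scale $s_j=e^{-j^3}$, with $\sum_j W_j$ arbitrarily small) achieves $D^+(q)\ge 1/3$ for \emph{all} $q\in(0,1)$ simultaneously with a single perturbation, so the intersection over $q$ is needed only for the topology, not for the density. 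Your observation that the universal cap means you only ever need lower bounds on $I_\mu$, so extra mass cannot spoil the estimate, is exactly the right point and is sound: by construction the balls $B(\lambda_n,s_j)$ around shell-$j$ atoms contain no mass from $|\psi_0\rangle$ or from other shells, so $\mu(B)\asymp W_j/P_j$ there regardless of what else the vector carries.

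One caveat: the thinnest part of your write-up is the $G_\delta$ property. The continuity of $|\psi\rangle\mapsto I_{\mu_{|\psi\rangle}}(q,\eps)$ at fixed $\eps$ is true but not immediate — one must treat the infinitely many eigenvalues clustered within $\eps$ of $0$ (there every ball contains the whole tail mass, and the resulting series must be summed uniformly in a neighborhood of $|\psi\rangle$), and the reduction from the continuum of $\eps\downarrow 0$ to a countable grid should exploit the monotonicity of $\eps\mapsto I_\mu(q,\eps)$ rather than density of the grid alone. The paper sidesteps this entirely by citing Proposition~\ref{proposition2} (Proposition~3.1 of \cite{ACdO}); you may do the same, but if you intend your sketch to replace that citation, this step needs to be fleshed out.
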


\

\begin{remark}\label{remark1}{\rm  It is worth underlying that the positive upper fractal dimensions follow from the structure of the eigenvalues of the H-atom operator, and we have in fact a more general result. Namely, given $\alpha>0$, it follows from the same arguments presented in the proof of Theorem~\ref{mainTheor} that if~$T$ is a bounded self-adjoint operator, acting on a separable Hilbert space~$\mathcal{H}$, with pure point spectrum  so that:
\begin{enumerate}
\item its sequence of eigenvalues $\{\sigma_n\}$ converges to~$\sigma_0$ and satisfies 
\[
\displaystyle\lim_{n \to \infty} n^\alpha (\sigma_n-\sigma_0)  = L\ne0;
\]
\item $(\sigma_{n+1}-\sigma_n)$ vanishes monotonically;
\end{enumerate}
then, there exists a generic set~$\mathcal R$  of vectors  in~$\mathcal{H}$ so that, for each~$|\phi\rangle\in\mathcal R$ and each $0<q<1$,
\[D_{\displaystyle\mu_{|\phi\rangle}}^-(q)=0 \qquad \mathrm{ and} \qquad D_{\displaystyle\mu_{|\phi\rangle}}^+(q) = \frac{1}{1+ \alpha},\] 
where $\mu_{|\phi\rangle}$ denotes the spectral measure of $T$ associated with $|\phi\rangle$. We borrow  examples from the literature: given $0< \beta < 1$, let $\alpha = \frac{1 - \beta}{\beta}$. Then, by Theorem~2 in~\cite{Simonpointspc}, there exists a continuous one-dimensional Schr\"odinger operator
\[H_{V^\beta} := -\frac{\rd ^2}{\rd x^2} + V^\beta\]
acting in an appropriate domain in~$\LL^2[0,\infty)$ so that $\{1/j^\alpha\}_{j=1}^\infty$ are the eigenvalues of $H_{V^\beta}$ in $[0,\infty)$, where $V^\beta$ is a real-valued multiplication (potential) operator; denote by $\mathcal {\dot H}^\beta$ the closed subspace generated by the corresponding eigenfunctions. In this case, we  conclude that there exists a generic set~$\mathcal R^\beta$ in~$\mathcal{\dot H}^\beta$    so that, for each~$|\phi\rangle\in\mathcal R^\beta$ and each $0<q<1$,
\[D_{\displaystyle\mu_{|\phi\rangle}}^-(q)=0 \qquad and \qquad D_{\displaystyle\mu_{|\phi\rangle}}^+(q) = \beta.\]}
\end{remark} 

\begin{remark}{\rm Recently, the present authors have shown in \cite{ACdO} that for systems with thick point spectrum, Baire generally, each initial state has $0$-lower and $1$-upper generalized fractal dimensions, $0<q<1$. Here, we are in a different setting. Namely, in this paper, we show (in sense of the Remark \ref{remark1}) explicitly that, Baire generally, the exact values of the dimensions depend only on elementary properties of the eigenvalues of the H-atom operator.}
\end{remark}

\begin{remark}{\rm There are indications~\cite{alejandro} that spectral structures similar to the one presented in the H-atom occur for all atoms in the Periodic Table; if this is actually the case, then alike conclusions of Theorem~\ref{mainTheor} could be drawn for other atoms as well.}
\end{remark}

In Section~\ref{sectDynamics} some dynamical consequences of the positive fractal dimensions of some bound states will be discussed; this is a subtle subject that gives a flavor that the dynamics of such states (weakly) resemble the dynamics of  unbound states. Finally, in Section~\ref{sectionProof} the proof of Theorem~\ref{mainTheor} is presented. The ingredients of the proof are: some robust arguments developed in \cite{ACdO}, combined with a fine analysis of  generalized fractal dimensions of pure point measures of the H-atom (Proposition~\ref{mainlemma}). 


\section{Orthonormal basis moments}  \label{sectDynamics}

\noindent In order to probe dynamical (de)localization associated with an initial condition~$|\psi\rangle$ with respect to a general orthonormal basis~$\mathfrak B =\{|k\rangle \}$ of~$\dot{\mathcal H}$, one may quantify the ``travel to large dimensions~$k$'' by considering the time evolution of the {\em $p$-moments} of~$|\psi\rangle$, $p>0$, that is,
\[r_{p,\mathfrak B}^{|\psi\rangle}(t):=  \left(  \sum\nolimits_{k} |k|^p \; W_{|k\rangle ,|\psi\rangle}(t) \right)^{\frac1p}\; ,\]
where
\[W_{|k\rangle ,|\psi\rangle}(t) := \frac1t \int_0^t |\langle k\mid\psi(s)\rangle|^2\;\mathrm{d}s\]
denotes the time average probability of the particle (electron) in the state~$|\psi\rangle$ to be found in the basis vector~$|k\rangle $ at time~$t$. If one thinks of a polynomial growth $r_{p,\mathfrak B}^{|\psi\rangle}(t)\sim\; t^{\beta(p)}$, the {\em lower} and {\em upper $p$-moment growth exponents} are then naturally introduced, respectively, by
\[\beta_{|\psi\rangle}^-(p,\mathfrak B):= \liminf_{t\to\infty}\, \frac{\ln r_{p,\mathfrak B}^{|\psi\rangle}(t)}{\ln t},\qquad \beta_{|\psi\rangle}^+(p,\mathfrak B):= \limsup_{t\to\infty}\, \frac{\ln r_{p,\mathfrak B}^{|\psi\rangle}(t)}{\ln t}\,.\]
Finally, one  says that the state~$|\psi\rangle$ is {\em weakly dynamical delocalized} with respect to the basis~$\mathfrak B$ if, for some $p>0$, 
\[\beta_{|\psi\rangle}^+(p,\mathfrak B)>0\]
 (and so, $\beta_{|\psi\rangle}^+(p^\prime,\mathfrak B)>0$ for each $p^\prime>p$). The term {\em weakly} is due to the possibility of $\beta_{|\psi\rangle}^-(p,\mathfrak B)=0$, for all~$p>0$, and also because if $\beta_{|\psi\rangle}^+(p,\mathfrak B)>0$, the interpretation is that 
\[r_{p,\mathfrak B}^{|\psi\rangle}(t_j)\sim\; t_j^{\beta_{|\psi\rangle}^+(p,\mathfrak B)}\]
for a sequence of instants of time $t_j\to\infty$. Furthermore, for large~$t_j$, one concludes that the time average projections~$|\langle k|\psi(t)\rangle|^2$ are relevant for large values of~$k$, and so the dynamics resembles that of the unbounded states, since the initial state $|\psi\rangle$ ``scapes'' any finitely generated subspace of $\dot{\mathcal{H}}$ as $t_j\to\infty$ (it is a kind of dynamical instability, and for the continuous subspace of the H-atom it follows by Riemann-Lebesgue Lemma; see Theorem~13.3.7 in~\cite{ISTQD}). 

Now we may combine Theorem~\ref{mainTheor}  with an important relation  due to  Barbaroux, Germinet and Tcheremchantsev~\cite{BGT2001}, independently obtained by Guarneri and Schultz-Baldes~\cite{GuarSB1999b}, which holds for all orthonormal basis and all $p>0$, namely
\[\beta_{|\psi\rangle}^+(p,\mathfrak B) \geq   D_{\displaystyle\mu_{|\psi\rangle}}^+\biggr(\frac{1}{1+p}\biggr),\] 
in order to obtain the following result.

\begin{theorem}\label{mainTheor2}
There exists a generic set~$\mathcal R$  of initial conditions in~$\dot{\mathcal H}$ so that, for each~$|\psi\rangle\in\mathcal R$, each orthonormal basis~$\mathfrak B$ and each $p>0$, 
\[\beta_{|\psi\rangle}^+(p,\mathfrak B) \ge \frac13.\]
In particular, weak dynamical delocalization occurs  for initial conditions in a generic set, and with respect to all orthonormal basis.
\end{theorem}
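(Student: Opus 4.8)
The plan is to obtain Theorem~\ref{mainTheor2} as an essentially immediate corollary of Theorem~\ref{mainTheor} together with the lower bound of Barbaroux, Germinet and Tcheremchantsev displayed just above its statement, so that no new generic set needs to be constructed. Concretely, I would take $\mathcal R$ to be exactly the generic set furnished by Theorem~\ref{mainTheor}; the point is that this single set already works simultaneously for every $0<q<1$, and since the exponents $\beta^+_{|\psi\rangle}(p,\mathfrak B)$ will be controlled through the dimensions $D^+_{\mu_{|\psi\rangle}}(q)$, no further countable intersection is required.

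First I would record the elementary change of variable $q=\frac{1}{1+p}$. As $p$ ranges over $(0,\infty)$, the value $q$ ranges over the open interval $(0,1)$, and this correspondence is a bijection. This is the algebraic bridge that matches the admissible range $0<q<1$ appearing in Theorem~\ref{mainTheor} with the admissible range $p>0$ in the definition of the moment growth exponents.

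Next, fix $|\psi\rangle\in\mathcal R$, an arbitrary orthonormal basis $\mathfrak B$ of $\dot{\mathcal H}$, and $p>0$. Setting $q=\frac{1}{1+p}\in(0,1)$, Theorem~\ref{mainTheor} gives $D^+_{\mu_{|\psi\rangle}}\bigl(\tfrac{1}{1+p}\bigr)=\tfrac13$. Feeding this into the inequality $\beta^+_{|\psi\rangle}(p,\mathfrak B)\ge D^+_{\mu_{|\psi\rangle}}\bigl(\tfrac{1}{1+p}\bigr)$, which holds for every orthonormal basis and every $p>0$, yields $\beta^+_{|\psi\rangle}(p,\mathfrak B)\ge\tfrac13$. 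Since $|\psi\rangle$, $\mathfrak B$ and $p$ were arbitrary, this is precisely the claimed bound.

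Finally, the ``in particular'' assertion follows at once: because $\beta^+_{|\psi\rangle}(p,\mathfrak B)\ge\frac13>0$ for (some, indeed every) $p>0$, the defining condition for weak dynamical delocalization is met, uniformly over all orthonormal bases. As for the main obstacle, there is genuinely no hard step inside this argument, since both ingredients are already available---the real mathematical content sits in Theorem~\ref{mainTheor} (the spectral-dimension computation) and in the cited dynamical lower bound. The only point that merits a moment's care is checking that the generic set can be chosen independently of $p$ and of $\mathfrak B$; this is automatic, because Theorem~\ref{mainTheor} already delivers one generic set valid for all $q\in(0,1)$, while the Barbaroux--Germinet--Tcheremchantsev inequality is itself uniform in $\mathfrak B$ and $p$.
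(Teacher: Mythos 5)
Your proposal is correct and follows exactly the paper's own route: the paper presents Theorem~\ref{mainTheor2} precisely as the combination of the generic set from Theorem~\ref{mainTheor} (valid simultaneously for all $0<q<1$) with the Barbaroux--Germinet--Tcheremchantsev inequality $\beta^+_{|\psi\rangle}(p,\mathfrak B)\ge D^+_{\mu_{|\psi\rangle}}\bigl(\tfrac{1}{1+p}\bigr)$, via the same substitution $q=\tfrac{1}{1+p}\in(0,1)$ for $p>0$. Your observation that no new generic set or countable intersection is needed matches the paper's reasoning.
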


\begin{remark}\end{remark}
\begin{enumerate}
\item Note that in the inequality presented in Theorem~\ref{mainTheor2}, the lower bound is independent of the basis~$\mathfrak B$ and  $\beta_{|\psi\rangle}^+(p,\mathfrak B)=\infty$ may occur; this is the case if  $r_{p,\mathfrak B}^{|\psi\rangle}(t)=\infty$ for all~$t$. It is not an easy task to control the time dependence of the moments.

\item   If the initial condition is $|\psi\rangle = |n,l,m\rangle$ (i.e., an orbital), then $|\psi(s)\rangle = e^{-is\lambda_n}|\psi\rangle$, and so $W_{|k\rangle ,|\psi\rangle}(t)$ is constant in time for all~$k$. Hence, $r_{p,\mathfrak B}^{|\psi\rangle}(t)$ is also constant, which may be finitely or infinitely valued. If it is finitely valued, then $\beta_{|\psi\rangle}^+(p,\mathfrak B)=0$ for each~$p>0$ and each~$|n,l,m\rangle\notin \mathcal R.$
\item If the basis is given by the orbitals, that is, if $\mathfrak B'=\{|n,l,m\rangle\}$, then since for each bound state $|\psi\rangle$ and each $s$, one has
\begin{eqnarray*}
|\langle n,l,m|\psi(s)\rangle| &=& |\langle n,l,m|e^{-is\dot H}\psi\rangle| \\ &=& |\langle e^{is\dot H}(n,l,m)|\psi\rangle| = |\langle e^{is\lambda_n}(n,l,m)|\psi\rangle|=|\langle n,l,m|\psi\rangle|\,,
\end{eqnarray*} it follows that  $W_{|n,l,m\rangle ,|\psi\rangle}(t)$ is also constant in time for all~$n,l,m$. Hence,~$\mathfrak B'$ is not an interesting basis here. 
\item By suitable adaptations of results in~\cite{GK2004}, given a smooth function with compact support $f\in\CC_0^\infty(\mathbb R)$, for each  $f(\dot H)|\psi\rangle$ and each~$p>0$, one has $0\leq \beta_{f(\dot H)|\psi\rangle}^+(p,\mathfrak B)\leq 1$, as soon as the normalized vector $|\psi\rangle/\||\psi\rangle\|$ is an element of~$\mathfrak B$; such bases are actually interesting in this setting.
\end{enumerate}

\begin{remark}{\rm 
With respect to the dimensions and dynamical consequences presented above,  it was not without surprise the discovery that there were basic theoretical properties of the Hydrogen atom Hamiltonian not yet explored in the literature. It is natural to wonder why  such properties were not noticed in the literature so far. We see two reasons; first,  the vectors generated by a finite set of H-atom eigenfunctions are excluded  from the generic set for which our results hold true (without mentioning the specific forms of the coefficients $a_n$; see the proofs in Section~\ref{sectionProof}); second, from the dynamical point of view, the ``instability'' is presumed to occur only for a subsequence of times. We expect that this work could stimulate experimentalists to check our findings.
 }
\end{remark}

\begin{remark}{\rm To put our work into perspective, we note that establishing bounds on the exponents  $\beta_\psi^\mp(p,\mathfrak B)$ in purely spectral terms (that is, without making reference to a specific form of a self-adjoint operator $T$) is an important problem, that has attracted significant interest (see~\cite{Alo,BGT2001,BreuerLS,Carvalho1,Carvalho2,DamanikDDP,Guarneri1,Guarneri2,GuarSB1999b,last2001,StolzIntr}). In this context, we emphasize that the lower bounds in Theorem~\ref{mainTheor2} only depend on the form of the Hydrogen atom eigenvalues (see also Remark \ref{remark1}).} 
\end{remark}


\section{Proof of Theorem~\ref{mainTheor}}\label{sectionProof}

\noindent Some preparation is required in order to present the proof of Theorem~\ref{mainTheor}. A crucial point in the proof is that besides the monotonic sequence of eigenvalues, $(\lambda_n)$, the sequence of distances between consecutive eigenvalues, $(\lambda_{n+1}-\lambda_{n})$, is also monotonic and it vanishes.

In the proof of Proposition~\ref{mainlemma}, we will properly bound $ 1/\eps \int \mu_{|\psi\rangle} (B(x,\eps))^q \, \rd x$ from above. In order to do that, we combine a particular partition of the support of $\mu_{|\psi\rangle}$ with the distribution of the eigenvalues of the H-atom operator.

\begin{remark}\label{remarkJensen}{\rm Let $(c_n)_{n \in \mathbb{N}}$ be a nonnegative sequence of real numbers such that $\sum_{n=1}^\infty c_n =  1$. Then, for each $0<r<1$ and each $N \in\mathbb{N}$,
\[\sum_{n=1}^N (c_n)^r \leq  \sum_{n=1}^N \frac{1}{n^r}.\]
Namely, since $0<r<1$, by Jensen's inequality,
\[ \sum_{n=1}^N (c_n)^r \leq  N^{1-r} \biggl(\sum_{n=1}^N c_n \biggl)^r\leq N^{1-r} \leq \sum_{n=1}^N \frac{1}{n^r}.\]
}
\end{remark}

\begin{proposition}\label{mainlemma} Let $0\ne|\psi\rangle \in \dot {\mathcal H}$. Then, 
 for each $0<q<1$, \[D_{\displaystyle\mu_{|\psi\rangle}}^+(q) \leq  \frac{1}{3}.\]
\end{proposition}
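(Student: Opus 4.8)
The plan is to bound $D_{\mu_{|\psi\rangle}}^+(q)$ from above by estimating the alternative quantity $\frac{1}{\eps}\int_{\mathbb R}\mu_{|\psi\rangle}(B(x,\eps))^q\,\rd x$ from property~3 of the dimensions, since for $0<q<1$ the factor $(q-1)$ is negative and an upper bound on this integral will translate into an upper bound on $D^+$. Writing $\mu=\mu_{|\psi\rangle}$ and grouping the weights as $p_n:=\sum_{l,m}|a_{n,l,m}|^2=\mu(\{\lambda_n\})$, so that $\sum_n p_n=1$ after normalization, the measure is $\mu=\sum_n p_n\,\delta_{\lambda_n}$ concentrated on the eigenvalues $\lambda_n=-\Lambda/n^2$. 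The key geometric input is that the gaps $g_n:=\lambda_{n+1}-\lambda_n$ are positive, monotonically decreasing, and vanish like $n^{-3}$ (since $\lambda_n\sim -\Lambda/n^2$ gives $g_n\sim 2\Lambda/n^3$); this is precisely the $\alpha=2$ instance of Remark~\ref{remark1}, and the exponent $3=1+\alpha$ is what produces the value $1/3$.

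First I would fix $\eps>0$ small and split the eigenvalue index set according to the size of the local gaps relative to $\eps$. Because $g_n\sim 2\Lambda/n^3$ is decreasing, there is a threshold index $N=N(\eps)\sim \eps^{-1/3}$ separating the ``resolved'' eigenvalues (those with $n\le N$, whose mutual gaps exceed $2\eps$, so that the balls $B(\lambda_n,\eps)$ are essentially disjoint and each picks up only the single atom $p_n$) from the ``unresolved'' eigenvalues (those with $n>N$, which accumulate at $\lambda_0=0$ inside a window of length $O(\eps)$). For a point $x$ whose $\eps$-ball meets only one isolated atom $\lambda_n$ with $n\le N$, one has $\mu(B(x,\eps))^q=p_n^q$ on a set of $x$ of Lebesgue measure $\le 4\eps$ or so; summing over $n\le N$ contributes roughly $\eps\sum_{n\le N}p_n^q$. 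For the accumulation region near $0$, all the unresolved mass $\sum_{n>N}p_n$ can be captured by $O(1)$ balls of radius $\eps$, contributing at most a bounded number of terms of size $(\sum_{n>N}p_n)^q\le 1$.

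The next step is to invoke Remark~\ref{remarkJensen}: since the $p_n$ are nonnegative and sum to~$1$, for $0<q<1$ one has $\sum_{n=1}^N p_n^q\le \sum_{n=1}^N n^{-q}\le C\,N^{1-q}$ for $q\ne 1$. Combining the two regions gives a bound of the shape
\[
\frac{1}{\eps}\int_{\mathbb R}\mu(B(x,\eps))^q\,\rd x\;\le\; C\sum_{n=1}^{N}p_n^q + \frac{C'}{\eps}\;\le\; C\,N^{1-q}+\frac{C'}{\eps}\,,
\]
and substituting $N\sim \eps^{-1/3}$ yields an upper bound of order $\eps^{-(1-q)/3}$ from the first term (the second term, of order $\eps^{-1}$, must be handled so it does not dominate — see below). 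Taking logarithms, dividing by $(q-1)\ln\eps$ (note $(q-1)<0$ and $\ln\eps<0$), and passing to the $\limsup$ as $\eps\downarrow 0$ would produce $D^+_\mu(q)\le \frac{(1-q)/3}{1-q}=\frac13$.

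The main obstacle is the accumulation term: a crude estimate gives a contribution of order $1/\eps$, which after taking logs would swamp the $\eps^{-(1-q)/3}$ term and yield the useless bound $D^+\le 1$. The resolution is that the $O(1/\eps)$ factor comes from the Lebesgue length $\eps$ of the balls, not from the measure, and the measure captured near the accumulation point is bounded by $1$; thus one should track the \emph{number} of distinct balls needed rather than integrate blindly. More carefully, one partitions the support into the isolated atoms together with a single interval of length $O(\eps)$ around $\lambda_0=0$, so the accumulation region contributes only $\frac{1}{\eps}\cdot O(\eps)\cdot 1=O(1)$ to the integral, which is negligible in the logarithmic scale compared to $N^{1-q}\to\infty$. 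Getting this bookkeeping right — in particular verifying that overlaps between balls only improve (increase) the exponent and that the threshold $N\sim\eps^{-1/3}$ is dictated precisely by $g_N\sim\eps$, i.e. by the cubic vanishing of the gaps — is the crux, and it is exactly here that the exponent $1/3$ (rather than any other value) is forced by the $n^{-3}$ decay of the gaps.
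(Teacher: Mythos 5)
Your decomposition (resolved atoms versus accumulation region, threshold $N(\epsilon)\sim\epsilon^{-1/3}$, Jensen-type bound $\sum_{n\le N}p_n^q\le\sum_{n\le N}n^{-q}\le CN^{1-q}$) is the same skeleton as the paper's proof, but the step you yourself single out as the crux fails. The unresolved eigenvalues are those $\lambda_n=-\Lambda/n^2$ with $n>N(\epsilon)\sim\epsilon^{-1/3}$, and they fill the interval $[\lambda_{N(\epsilon)},0)$, whose length is $\Lambda/N(\epsilon)^2\sim\Lambda\,\epsilon^{2/3}$. Since $\epsilon^{2/3}\gg\epsilon$, this region is \emph{not} ``a single interval of length $O(\epsilon)$'' around the accumulation point and cannot be captured by $O(1)$ balls of radius $\epsilon$; one needs on the order of $\epsilon^{-1/3}$ of them. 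Consequently the honest crude bound for this region is $\frac{1}{\epsilon}\cdot O(\epsilon^{2/3})\cdot 1=O(\epsilon^{-1/3})$, not $O(1)$, and since $\epsilon^{-1/3}$ dominates the resolved-region term $\epsilon^{-(1-q)/3}$, your estimate only yields $D^+_{\mu}(q)\le\frac{1}{3(1-q)}$, which exceeds $1/3$ for every $q\in(0,1)$. The clean constant $1/3$ in your final limsup computation is an artifact of the incorrect $O(1)$ claim.

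The paper closes exactly this gap with an idea absent from your proposal: it accepts the $O(\epsilon^{-1/3})$ contribution of the accumulation region (this is its estimate (\ref{6eqlem})), hence only proves $D^+_\mu(r)\le\frac{1}{3(1-r)}$, but it proves this for \emph{every} parameter $0<r<1$ and then exploits the monotonicity of $r\mapsto D^+_\mu(r)$: for fixed $q$ and any $0<r<q$ one has $D^+_\mu(q)\le D^+_\mu(r)\le\frac{1}{3(1-r)}$, and letting $r\downarrow 0$ gives $D^+_\mu(q)\le\frac13$. Alternatively, your route can be repaired directly: cover $[\lambda_{N(\epsilon)},0)$ by $M\sim\epsilon^{-1/3}$ consecutive intervals $I_j$ of length $2\epsilon$, note that for $x\in I_j$ one has $\mu(B(x,\epsilon))^q\le\mu(I_{j-1})^q+\mu(I_j)^q+\mu(I_{j+1})^q$ (subadditivity of $t\mapsto t^q$), and apply the finite Jensen inequality of Remark~\ref{remarkJensen} to the interval masses $\mu(I_j)$ rather than to the atomic masses; this bounds the accumulation contribution by a constant times $M^{1-q}\sim\epsilon^{-(1-q)/3}$, matching the resolved region, after which your direct computation does produce $1/3$. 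One of these two repairs is needed; as written, the proposal does not establish the proposition.
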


\begin{proof} We will assume, without loss of generality, that $\Vert |\psi\rangle \Vert = 1$ (so, $ \sum_{n=1}^\infty \mu_{|\psi\rangle}(\{\lambda_n\}) = 1$). Recall that $\lambda_n=-\Lambda/n^2$, $n\in\mathbb{N}$, denote the (finitely degenerated) eigenvalues of~$\dot H$. Since 
\[\displaystyle\lim_{n \to \infty} n^3 (\lambda_{n+1} - \lambda_n) = 2 \Lambda,\] 
there exists $C>0$ such that, for each $n\in\mathbb{N}$, 
\begin{equation}\label{1eqlem}
(\lambda_{n+1} - \lambda_n)  > \frac{C}{n^3}. 
\end{equation}

\ 

Fix $0<r<q<1$. For each $\epsilon>0$, let $N_{\eps}:= [(C/\epsilon)^{1/3}]$ be the integer part of $(C/\epsilon)^{1/3}$, and note that 
\begin{eqnarray*}
\nonumber \frac{1}{\eps} \int \mu_{|\psi\rangle}(B(x,\eps))^r \, \rd x &=&  \frac{1}{\eps} \int_{(-(\Lambda + \eps),-\Lambda)} \mu_{|\psi\rangle}(B(x,\eps))^r \, \rd x + \frac{1}{\eps} \int_{[-\Lambda,\lambda_{N_\eps})} \mu_{|\psi\rangle}(B(x,\eps))^r \, \rd x\\  &+&  \frac{1}{\eps} \int_{[\lambda_{N_\eps},0)} \mu_{|\psi\rangle}(B(x,\eps))^r \, \rd x  + \frac{1}{\eps} \int_{[0,\eps)} \mu_{|\psi\rangle}(B(x,\eps))^r \, \rd x.
\end{eqnarray*}
By using  that $\mu_{|\psi\rangle}(B(x,\eps))\le 1$ in the first and fourth integrals, one obtains
\begin{eqnarray}\label{2eqlem}
 \frac{1}{\eps} \int \mu_{|\psi\rangle}(B(x,\eps))^r \, \rd x \leq   \frac{1}{\eps} \int_{[-\Lambda,\lambda_{N_\eps})} \mu_{|\psi\rangle}(B(x,\eps))^r \, \rd x +  \frac{1}{\eps} \int_{[\lambda_{N_\eps},0)} \mu_{|\psi\rangle}(B(x,\eps))^r \, \rd x + 2.
\end{eqnarray}

We will separately estimate both integrals on the right side of (\ref{2eqlem}). We remark that $N_{\eps}$ was chosen so that $(\lambda_{N_{\eps}+1} - \lambda_{N_{\eps}}) > \eps$.  In what follows, $C_r$ and $C_\Lambda$ always represent constants that depend only on $r$ and $\Lambda$, respectively. We also note that the values of $C_r$ and $C_\Lambda$ may change from one estimate to another.

Since $\displaystyle\lim_{n\to\infty} (\lambda_{n+1} - \lambda_{n}) =0$ monotonically and, by (\ref{1eqlem}), $(\lambda_{N_{\eps}+1} - \lambda_{N_{\eps}}) > \eps$, it follows that for each $2\leq n\leq N_{\eps} -1$, 
\begin{eqnarray*}
\max_{x \in B(\lambda_n,\eps)} \mu_{|\psi\rangle}(B(x,\eps))^r  &\leq& \mu_{|\psi\rangle}(B(\lambda_n,2\eps))^r \leq (\mu_{|\psi\rangle}(\{\lambda_{n-1}\})  + \mu_{|\psi\rangle}(\{\lambda_n\}) + \mu_{|\psi\rangle}(\{\lambda_{n+1}\}))^r\\ &\leq& 3^{r}\max\{\mu_{|\psi\rangle}(\{\lambda_{n-1}\})^r,\mu_{|\psi\rangle}(\{\lambda_{n}\})^r,\mu_{|\psi\rangle}(\{\lambda_{n+1}\})^r\}\\&\leq &  3^r\big[ \mu_{|\psi\rangle}(\{\lambda_{n-1}\})^r + \mu_{|\psi\rangle}(\{\lambda_{n}\})^r + \mu_{|\psi\rangle}(\{\lambda_{n+1}\})^r\big].
\end{eqnarray*}
Thus, by Remark~\ref{remarkJensen}, 

\begin{eqnarray}\label{3eqlem}
\nonumber  \frac{1}{\eps} \sum_{n=1}^{{N_{\eps}}-1} \int_{B(\lambda_n,\eps)} \mu_{|\psi\rangle}(B(x,\eps))^r \, \, \rd x &\leq& 2\big [\mu_{|\psi\rangle}(\{\Lambda\}) + \mu_{|\psi\rangle}(\{\lambda_2\})\big] +   C_r  \sum_{n=1}^{{N_{\eps}}} \mu_{|\psi\rangle}(\{\lambda_n\})^r\\ &\leq& C_\Lambda + C_r \sum_{n=1}^{{N_{\eps}}} \frac{1}{n^r}. 
\end{eqnarray}
Now, for each $x \in [-\Lambda,\lambda_{N_{\eps}})$, 
\begin{eqnarray}\label{4eqlem} \nonumber x \not \in \displaystyle\bigcup_{n=1}^{N_{\eps}} B(\lambda_n,\eps) & \Longleftrightarrow& |x- \lambda_n| \geq \eps, ~{\rm  for}~  1\leq n\leq N_{\eps} \\ &\Longleftrightarrow&  \nonumber \lambda_n  \not \in B(x,\eps), ~{\rm  for}~ 1\leq n \leq N_{\eps} \\ &\Longleftrightarrow&   \mu_{|\psi\rangle}(B(x,\eps)) = 0.
\end{eqnarray}

Thus, by combining 
(\ref{3eqlem}) and (\ref{4eqlem}), one has, 
\begin{eqnarray}\label{5eqlem}
\nonumber   \frac{1}{\eps} \int_{[-\Lambda,\lambda_{N_{\eps}})} \mu_{|\psi\rangle}(B(x,\eps))^r \, \, \rd x &\leq&  \frac{1}{\eps} \sum_{n=1}^{{N_{\eps}}-1} \int_{B(\lambda_n,\eps)} \mu_{|\psi\rangle}(B(x,\eps))^r \, \, \rd x +  \frac{1}{\eps} \int_{[\lambda_{N_{\eps} - 1},\lambda_{N_{\eps}})} \mu_{|\psi\rangle}(B(x,\eps))^r \, \, \rd x\\ \nonumber &\leq& C_\Lambda +  C_r \sum_{n=1}^{{N_{\eps}}} \frac{1}{n^r} + \frac{1}{\eps}\vert \lambda_{N_{\eps}} - \lambda_{N_{\eps} - 1} \vert   \\ \nonumber &\leq& C_\Lambda + C_r\int_1^{N_\eps} \frac{1}{x^r} \rd x  \\ \nonumber &\leq& (C_\Lambda + C_r) \, \frac{N_\eps^{1-r} - 1}{1-r} \\ &\leq& (C_\Lambda + C_r)   \,  \frac{\eps^{\frac{r-1}{3}} - 1}{1-r}.
\end{eqnarray}

Since $\displaystyle\lim_{n \to \infty} n^3 (\lambda_{n+1} - \lambda_n) = 2 \Lambda$, it follows that, for sufficiently small $\eps$,
\begin{eqnarray}\label{6eqlem}\nonumber  \frac{1}{\epsilon} \int_{[\lambda_{N_{\eps}},0)} \mu_{|\psi \rangle}(B(x,\eps))^r\, \rd x &=&  \frac{1}{\epsilon} \sum_{n=N_{\eps}}^\infty \,\int_{[\lambda_n, \lambda_{n+1})}  \mu_{|\psi \rangle}(B(x,\eps))^{r} \, \rd x \\ \nonumber &\leq&  \frac{1}{\epsilon} \sum_{n=N_{\eps}}^\infty |\lambda_{n+1} -\lambda_n|\\\nonumber &\leq& C_\Lambda  \frac{1}{\epsilon} \sum_{n=N_{\eps}}^\infty \frac{1}{n^{3}}\\ \nonumber &\leq& C_\Lambda     \frac{1}{\epsilon} \int_{N_{\eps} - 1}^\infty  \frac{1}{x^{3}} \, \rd x\\ \nonumber &\leq&  C_\Lambda     \frac{1}{\epsilon} \frac{1}{(N_{\eps} -1)^{2}} \\ \nonumber &\leq& C_\Lambda  \, \eps^{-\frac{1}{3}} \\ &=& C_\Lambda   \, \eps^{\frac{r-1}{3(1-r)}}.
 \end{eqnarray} 

By combining (\ref{2eqlem}), (\ref{5eqlem}) and (\ref{6eqlem}), one obtains, for sufficiently small $\eps$,  

\begin{equation*}L_{\displaystyle\mu_{|\psi \rangle}} (r,\epsilon) := \frac{1}{\epsilon} \int \mu_{|\psi \rangle}(B(x,\eps))^{r} \, \rd x \leq  (C_\Lambda + C_r)  \, \eps^{\frac{r-1}{3(1-r)}}=: Z(\Lambda,r,\eps). 
\end{equation*}
 Hence, 
\[D_{{\mu_{|\psi \rangle}}}^+(r) = \limsup_{\epsilon \downarrow 0} \frac{\ln \,L_{\displaystyle\mu_{|\psi \rangle}} (r,\epsilon)}{(r-1) \ln \epsilon} \leq \limsup_{\epsilon \downarrow 0} \frac{\ln \, Z(\Lambda,r,\eps)}{(r-1) \ln \epsilon} = \frac{1}{3(1-r)}.\]

Finally, since $r\mapsto D_\mu^+(r)$  is a nonincreasing function (see Subsection~\ref{sectDimensions}), one has

\[D_{{\displaystyle\mu_{|\psi \rangle}}}^+(q) \leq \limsup_ {r \to 0} D_{{\mu_{|\psi \rangle}}}^+(r)  \leq \limsup_ {r \to 0}\frac{1}{3(1-r)}= \frac{1}{3}\,,
\] for all $0<q<1$.
\end{proof}

\begin{proposition}[Proposition 3.1 in \cite{ACdO}]\label{proposition2} For every $q\in(0,1)$ and every $0\le\gamma\le1$, each of the sets
\[G_{\gamma}^{-}(q):=\big\{|\psi\rangle \in \dot{\mathcal{H}} \mid D_{\displaystyle\mu_{|\psi\rangle}}^{-}(q) \leq \gamma\big\}\]
and
\[G_{\gamma}^{+}(q):=\big\{|\psi\rangle \in \dot{\mathcal{H}} \mid D_{\displaystyle\mu_{|\psi\rangle}}^{+}(q) \geq \gamma\big\} \]
is a $G_\delta$ set in~$\dot{\mathcal H}$.
\end{proposition}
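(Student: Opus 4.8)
The plan is to deduce the $G_\delta$ property from the continuity, in the norm topology of $\dot{\mathcal H}$, of a suitable family of real functionals indexed by the scale $\epsilon$, followed by a routine rewriting of the $\limsup$ and $\liminf$ that define $D^\pm_{\mu_{|\psi\rangle}}(q)$. It is convenient to work with the alternative expression from item~3 of Subsection~\ref{sectDimensions}: for fixed $0<q<1$ and $0<\epsilon<1$, set
\[\Phi_\epsilon(|\psi\rangle):=\frac{\ln\bigl[\tfrac1\epsilon\int_{\mathbb{R}}\mu_{|\psi\rangle}(B(x,\epsilon))^q\,\rd x\bigr]}{(q-1)\ln\epsilon},\]
so that $D^-_{\mu_{|\psi\rangle}}(q)=\liminf_{\epsilon\downarrow0}\Phi_\epsilon(|\psi\rangle)$ and $D^+_{\mu_{|\psi\rangle}}(q)=\limsup_{\epsilon\downarrow0}\Phi_\epsilon(|\psi\rangle)$. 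Since the integrand is bounded by $\||\psi\rangle\|^{2q}$ and supported in $x\in(-\Lambda-\epsilon,\epsilon)$, the inner integral is finite, and it is strictly positive whenever $|\psi\rangle\ne0$; hence each $\Phi_\epsilon$ is a well-defined real function on $\dot{\mathcal H}\setminus\{0\}$.

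First I would prove that, for each fixed $q$ and $\epsilon$, the map $|\psi\rangle\mapsto\Phi_\epsilon(|\psi\rangle)$ is continuous on $\dot{\mathcal H}\setminus\{0\}$. Writing $w_n(\psi):=\mu_{|\psi\rangle}(\{\lambda_n\})=\sum_{l,m}|a_{n,l,m}|^2$, the key point is that $|\psi\rangle\mapsto(w_n(\psi))_n$ is continuous from the norm topology into $\ell^1$: if $|\psi_j\rangle\to|\psi\rangle$, then each coefficient $a_{n,l,m}$ converges, so $w_n(\psi_j)\to w_n(\psi)$ for every $n$ (a finite sum over $l,m$), while $\sum_n w_n(\psi_j)=\||\psi_j\rangle\|^2\to\||\psi\rangle\|^2=\sum_n w_n(\psi)$; Scheffé's lemma then upgrades this to $\sum_n|w_n(\psi_j)-w_n(\psi)|\to0$. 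Because $\mu_{|\psi\rangle}(B(x,\epsilon))=\sum_{n:|\lambda_n-x|<\epsilon}w_n(\psi)$, the uniform-in-$x$ bound $|\mu_{|\psi_j\rangle}(B(x,\epsilon))-\mu_{|\psi\rangle}(B(x,\epsilon))|\le\sum_n|w_n(\psi_j)-w_n(\psi)|$ gives pointwise convergence of the integrands $x\mapsto\mu_{|\psi\rangle}(B(x,\epsilon))^q$; since these are uniformly bounded and the domain of integration is bounded, dominated convergence yields continuity of the inner integral, and then of $\Phi_\epsilon$ after composing with the (continuous) logarithm and dividing by the nonzero constant $(q-1)\ln\epsilon$.

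With continuity in hand, the remainder is a standard set-theoretic manipulation. Rewriting the outer $\limsup$ and $\liminf$ over the continuum of scales as
\[G_\gamma^+(q)=\bigcap_{j\ge1}\ \bigcap_{k\ge1}\ \bigcup_{0<\epsilon<1/k}\bigl\{\,|\psi\rangle\in\dot{\mathcal H}:\Phi_\epsilon(|\psi\rangle)>\gamma-\tfrac1j\,\bigr\}\]
and
\[G_\gamma^-(q)=\bigcap_{j\ge1}\ \bigcap_{k\ge1}\ \bigcup_{0<\epsilon<1/k}\bigl\{\,|\psi\rangle\in\dot{\mathcal H}:\Phi_\epsilon(|\psi\rangle)<\gamma+\tfrac1j\,\bigr\},\]
each inner set $\{\Phi_\epsilon>c\}$ (resp. $\{\Phi_\epsilon<c\}$) is open by continuity of $\Phi_\epsilon$, an arbitrary union of open sets is open, and the two outer countable intersections produce $G_\delta$ sets. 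Finally, since $\dot{\mathcal H}\setminus\{0\}$ is open in $\dot{\mathcal H}$, a $G_\delta$ subset of it is $G_\delta$ in $\dot{\mathcal H}$, and adjoining or deleting the single (hence closed, and so $G_\delta$) vector $0$ preserves the $G_\delta$ property; this disposes of the harmless zero-vector ambiguity. The main obstacle is the continuity step of the second paragraph: the negative exponent $q-1$ in the original form $I_{\mu_{|\psi\rangle}}(q,\epsilon)$ makes that expression awkward to control near vectors where some $\mu_{|\psi\rangle}(B(x,\epsilon))$ is small, which is precisely why I would route the argument through the equivalent integral $\tfrac1\epsilon\int\mu_{|\psi\rangle}(B(x,\epsilon))^q\,\rd x$, whose integrand is bounded and whose exponent $q$ is positive, turning the estimate into a clean application of Scheffé's lemma and dominated convergence.
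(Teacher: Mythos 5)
Your proof is correct, but a word of context first: the paper never proves Proposition~\ref{proposition2} at all --- it is imported, statement and proof, from Proposition~3.1 of \cite{ACdO} --- so there is no in-paper argument to compare against, and your write-up should be judged as a self-contained replacement for the citation. As such it succeeds, and it follows the route one would expect (essentially that of the cited reference and of related ``Wonderland''-type genericity arguments, cf.~\cite{Carvalho1}). The three pillars all hold up. (i)~Passing to the equivalent expression $\frac{1}{\epsilon}\int_{\mathbb{R}}\mu_{|\psi\rangle}(B(x,\epsilon))^q\,\rd x$ of item~3 in Subsection~\ref{sectDimensions} is legitimate, since every spectral measure here is supported in $[-\Lambda,0]$, and it is indeed the decisive move: the positive exponent $q$ sidesteps the continuity problems that the exponent $q-1$ in $I_\mu(q,\epsilon)$ would create near vectors with small ball masses. (ii)~Your continuity argument for the fixed-scale functional is sound: coefficientwise convergence plus convergence of norms upgrades, via Scheff\'e's lemma, to $\ell^1$-convergence of the weights $w_n(\psi)=\mu_{|\psi\rangle}(\{\lambda_n\})$, which gives the uniform-in-$x$ control of $\mu_{|\psi\rangle}(B(x,\epsilon))$ and then dominated convergence on the bounded window $(-\Lambda-\epsilon,\epsilon)$; one could even avoid Scheff\'e by estimating $\sum_n|w_n(\psi_j)-w_n(\psi)|$ directly with Cauchy--Schwarz against $\||\psi_j\rangle-|\psi\rangle\|$, which yields a quantitative modulus of continuity, but your version is perfectly adequate. (iii)~The rewriting of $\{\limsup_{\epsilon\downarrow0}\Phi_\epsilon\geq\gamma\}$ and $\{\liminf_{\epsilon\downarrow0}\Phi_\epsilon\leq\gamma\}$ as countable intersections of (possibly uncountable, hence still open) unions of strict level sets is the standard one and checks out against the monotonicity in $k$ of $\sup_{0<\epsilon<1/k}\Phi_\epsilon$ and $\inf_{0<\epsilon<1/k}\Phi_\epsilon$; and your closing remark --- that singletons are $G_\delta$ in a metric space and that finite unions and intersections of $G_\delta$ sets are $G_\delta$ --- correctly disposes of the zero vector, the only point where $\Phi_\epsilon$ is undefined.
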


\begin{proof}[{Proof} {\rm (Theorem~\ref{mainTheor})}]
Note that in the proof  of the lower dimension part,  we only use the properties of separability of~$\dot{\mathcal H}$ and the fact that the spectrum of~$\dot H$ is bounded and  pure point.
 
 Fix $0<q<1$ and pick a sequence of complex numbers $b_n\ne0$ such that 
\[ \sum_n |b_n|^{2q}<\infty;\]
hence, $\sum_n |b_n|^{2}<\infty$. Consider an arbitrary bound state  $|\psi\rangle = \sum_{n,l,m} a_{n,l,m}|n,l,m\rangle\in \dot{\mathcal H}$, together with the following sequences of vectors, for $k\in\mathbb N$:
\[|\phi_k\rangle = \sum_{n=1}^{k-1} \sum_{l,m} a_{n,l,m}|n,l,m\rangle,\quad |\psi_k\rangle = \sum_{n=1}^{k-1} \sum_{l,m} a_{n,l,m}|n,l,m\rangle + \sum_{n=k}^\infty b_n\,|n,0,0\rangle.\]
The hypothesis on~$b_n$ and the identity $\||\psi\rangle\|^2=\sum_{n,l,m}|a_{n,l,m}|^2$ result in 
\[\lim_{k\to\infty} |\phi_k\rangle = |\psi\rangle, \qquad \lim_{k\to\infty} (|\phi_k\rangle-|\psi_k\rangle)=0,\]
and so $\lim_{k\to\infty} |\psi_k\rangle = |\psi\rangle$; hence, by Proposition \ref{proposition2}, it is enough to show that $D_{\displaystyle\mu_{|\psi_k\rangle}}^-(q)=0$, for all~$|\psi_k\rangle$, to conclude that $G_{0}^{-}(q)$ is generic.

Now, two simple properties (and minor variations)  will be used: $\mu_{|\psi\rangle}(\{\lambda_n\})=\sum_{l,m}|a_{n,l,m}|^2$ and, since $0<q<1$, one has for all~$\eps>0$, $\mu_{|\psi\rangle}(B(x,\eps))^{q-1}\le \mu_{|\psi\rangle}(\{\lambda_n\})^{q-1}$. Hence,
\begin{eqnarray*}
I_{\displaystyle\mu_{|\psi_k\rangle}}(q,\eps) &=& \int \mu_{|\psi_k\rangle}(B(x,\eps))^{q-1} \rd \mu_{|\psi_k\rangle}(x) \\ &=& \sum_{n\ge 1}\mu_{|\psi_k\rangle}(B(\lambda_n,\eps))^{q-1}\,\mu_{|\psi_k\rangle}(\{\lambda_n\}) \\ &\le& \sum_{n\ge 1} \mu_{|\psi_k\rangle}(\{\lambda_n\})^q \\ &=& \sum_{n=1}^{k-1} \mu_{|\psi_k\rangle}(\{\lambda_n\})^q + \sum_{n=k}^{\infty} \mu_{|\psi_k\rangle}(\{\lambda_n\})^q \\ &=& \sum_{n=1}^{k-1} \big( \sum_{l,m}|a_{n,l,m}|^2 \big)^q + \sum_{n=k}^{\infty} |b_n|^{2q} =: S(q,\psi_k)<\infty,
\end{eqnarray*} which implies
\[D_{\displaystyle\mu_{|\psi_k\rangle}}^-(q) \le \liminf_{\eps\downarrow 0} \frac{\ln S(q,\psi_k)}{(1-q)\ln\eps}=0,\] 
and so, $G_0^-(q)$ is generic. Since a countable intersection of generic sets is also generic, in order to complete the proof of the first part of the theorem, it is enough to pick $\mathcal R_- = \cap_j G_0^-(1/j)$ as the generic set of bound states;  we have used the fact that $q\mapsto D_\mu^-(q)$ is a nonincreasing function. 

Now, we proceed to the proof of the upper dimension part, for which the only used ingredients are:  $\dot{\mathcal H}$ is separable and~$\dot H$ is a bounded pure point operator with finitely degenerated eigenvalues $\lambda_n=-\Lambda/n^2$, $n\in\mathbb{N}$. Specifically, we will use the following:  i)~$\displaystyle\lim_{n\to\infty} (\lambda_{n+1} - \lambda_{n}) =0$ monotonically and,  ii)~since  $\displaystyle\lim_{n\to\infty} n^3(\lambda_{n+1} - \lambda_{n}) = 2\Lambda$, there exists a $K > 0$ such that, for large~$n$, 
\[\frac{K}{2n^3} \leq \lambda_{n+1} - \lambda_{n} \leq \frac{K}{n^3}.\]

 Fix  $0<q<1$ and let $j\in\mathbb{N}$ be such that $j > \frac{q}{1-q}$. Let also
\[|\psi_j\rangle:=\sum_n \frac{1}{\sqrt{n^{ 1 + \frac{1}{j}}}}\,|n,0,0\rangle,\]
so that $|\psi_j\rangle\in\dot{\mathcal H}$, and note that its spectral measure is 
\[\mu_{|\psi_j\rangle} = \sum_n \frac{1}{{n^{ 1 + \frac{1}{j}}}}\, \delta_{\lambda_n}.\]
For large $N$, choose    
\[\eps_N:=\frac12 (\lambda_{N+1}-\lambda_{N})\ge \frac{K}{4N^{3}}\;\Longrightarrow\; N\ge \left( \frac{K}{4\eps_N} \right)^{\frac1{3}};\] 
the monotonicity of the sequence $(\lambda_{n+1}-\lambda_{n})$ implies that $\mu_{|\psi_j\rangle}(\lambda_n-\eps_N,\lambda_n+\eps_N)=\mu_{|\psi_j\rangle}(\{\lambda_n\})$, for $1\le n\le N$. Thus, one has
\begin{eqnarray*}
I_{\displaystyle\mu_{|\psi_j\rangle}}(q,\eps_N) &=& \int \mu_{|\psi_j\rangle}(x-\eps_N,x+\eps_N)^{q-1}\;\rd \mu_{|\psi_j\rangle}(x) \ge \sum_{n=1}^N \mu_{|\psi_j\rangle}(\{\lambda_n\})^q \\ &=&  \sum_{n=1}^N \frac{1}{{n^{ (1 + \frac{1}{j})q}}} \ge \int_1^N \rd x\,\frac1{x^{(1 + \frac{1}{j})q}} = \frac{1}{1-(1 + \frac{1}{j})q}\left( N^{1-(1 + \frac{1}{j})q}- 1 \right) \\ &\ge& \frac{1}{1-(1 + \frac{1}{j})q}\left( \left( \frac{K}{4\eps_N} \right)^{\frac{1-(1 + \frac{1}{j})q}{3}}- 1 \right)\,.
\end{eqnarray*}
Since 
\[D_{\displaystyle\mu_{|\psi_j\rangle}}^+(q) \ge \limsup_{\eps_N\downarrow 0} \frac{\ln I_{\displaystyle\mu_{|\psi_j\rangle}}(q,\eps_N)}{(q-1)\ln \eps_N}\,,\] the inequality
\[D_{\displaystyle\mu_{|\psi_j\rangle}}^+(q) \geq  \frac{1-(1 + \frac{1}{j})q}{3(1-q)}\]
follows.

Let $|\psi\rangle = \sum_{n,l,m} a_{n,l,m}|n,l,m\rangle$ be a general bound state, and for $\sigma>0$, let
\[|\psi_\sigma^j\rangle = \sum_{n\le M_1} \sum_{l,m} a_{n,l,m}\,|n,l,m\rangle + \sum_{n\ge M_2} \frac{1}{\sqrt{n^{ 1 + \frac{1}{j}}}}\,|n,0,0\rangle,\]with $M_1(j) < M_2(j)$ large enough so that
\[\sum_{n> M_1}|a_{n,l,m}|^2 <\sigma^2,\qquad \sum_{n\ge M_2} \frac{1}{{n^{ 1 + \frac{1}{j}}}}<\sigma^2.\] Then, \[
\mu_{|\psi_\sigma^j\rangle} = \sum_{n\le M_1} \sum_{l,m} |a_{n,l,m}|^2\,\delta_{\lambda_n} + \sum_{n\ge M_2} \frac{1}{{n^{ 1 + \frac{1}{j}}}}\,\delta_{\lambda_n},\] and
\begin{itemize}
\item $\||\psi\rangle - |\psi_\sigma^j\rangle\|^2 < 2\sigma^2,$ 
\item $D_{\displaystyle\mu_{|\psi_\sigma^j\rangle}}^+(q) \ge  \frac{1-(1 + \frac{1}{j})q}{3(1-q)} =: \gamma_q^j$.
\end{itemize} The first item  is a simple verification, and the second one may be reduced to be above calculation for~$\mu_{|\psi_j\rangle}$ as follows. Let $\eps_N$ be as above, with $N>M_2$; then,
\[I_{\displaystyle\mu_{|\psi_\sigma^j\rangle}}(q,\sigma_N) \ge  \sum_{n= M_2}^N \frac{1}{{n^{( 1 + \frac{1}{j})q}}},\]
and the second item follows from the same estimate presented for $I_{\displaystyle\mu_{|\psi_j\rangle}}(q,\sigma_N)$.  Since this holds for all $\sigma>0$, it follows that $G_{\gamma_q^j}^{+}(q)$, with 
\[
\gamma_q^j= \frac{1-(1 + \frac{1}{j})q}{3(1-q)},
\] is a dense subset of $\dot{\mathcal{H}}$; in particular, by Proposition \ref{proposition2}, 
\[
G_{\frac{1}{3}}^{+}(q) = \bigcap_{j > \frac{q}{1-q}} G_{\gamma_q^j}^+(q)
\] is generic. Now, just pick $\mathcal R_+=\cap_l G_{\frac{1}{3}}^+(1 - 1/l)$ and use the fact that $q\mapsto D_\mu^+(q)$ is a nonincreasing function. Thus, the set
\[
\mathcal R:=\mathcal R_-\cap \mathcal R_+ = \left\{|\psi\rangle\in\dot{\mathcal H}\mid D_{\displaystyle\mu_{|\psi\rangle}}^{-}(q) =0 \quad\mathrm{and}\quad D_{\displaystyle\mu_{|\psi\rangle}}^{+}(q) \ge1/3, \quad \forall\, 0<q<1  \right\}
\]  is  also generic. Since, by Proposition~\ref{mainlemma}, $D_{\displaystyle\mu_{|\psi\rangle}}^{+}(q) \le 1/3$ holds true for all nonzero vectors in~$\dot{\mathcal H}$ and all $0<q<1$, the result follows. 
\end{proof}


\subsubsection*{Acknowledgments} 
CRdO thanks the partial support by CNPq (a Brazilian government agency, under contract 303503/2018-1). SLC thanks the partial support by Fapemig (Minas Gerais state  agency; Universal Project 001/17/CEX-APQ-00352-17).



%
%

%
%
%

\end{document}